\begin{document}
%\begin{center}
\title{A Near Optimal Approximation Algorithm for Vertex-Cover Problem}
\author{Deepak Puthal\footnote{deepakpnitr@gmail.com} \\
Department of Computer Science \& Engineering\\
National Institute of Technology Silchar\\
Silchar, Assam, India}

%\end{center}
\maketitle
\begin{abstract}
Recently, there has been increasing interest and progress in improvising the approximation algorithm for well-known NP-Complete problems, particularly the approximation algorithm for the Vertex-Cover problem. Here we have proposed a polynomial time efficient algorithm for vertex-cover problem for more approximate to the optimal solution, which lead to the worst time complexity  $\Theta(V^2)$ and space complexity $\Theta(V+E)$. We show that our proposed method is more approximate with example and theorem proof. Our algorithm also induces improvement on previous algorithms for the independent set problem on graphs of small and high degree.
\end{abstract}
{\bf Keywords}:Approximation algorithm \and Vertex-Cover Problem \and Complexity \and Adjacency list.

\section{Introduction}
\label{intro}
A graph G represents as G = (V, E): V is number of vertices in graph and E is number of edges in graph and impliment as an adjacency lists or as an adjacency matrix for both directed and undirected graphs. There are two types of graph i.e. (I) Sparse graphs-those for which $|E|$ is much less than $|V|^2$ $(E<<V^2)$. (II) Dense graphs-those for which $|E|$ is close to $|V|^2 (E \simeq  V^2)$. Here we presented the graph as the adjacency-list for the evaluation of our algorithm. \\
The vertex-cover problem is to find a minimum number of vertex to cover a given undirected graph. We call such a vertex cover an optimal vertex cover. This problem is the optimization version of an NP-complete decision problem. Proposed algorithm is polynomial time algorithm in order to find the set of vertex to cover the graph. Which shows the better performance than the traditional algorithm for vertex cover  \cite{RefB1}. 

\section{The vertex-cover problem}
As it is NP-Hard problem so it is hard to find an optimal solution of a graph G, but not difficult to find a near optimal solution. Our propose method gives very near optimal solution for  Vertex-cover problem. The following approximation algorithm takes an undirected graph G as input  \cite{RefB1} and returns a set of vertex to cover the graph and whose size is less than the previous method. \\
All graphs mentioned here are simple undirected graph. We follow \cite{RefB2} for definitions. Our proposed method (See algorithm and Fig.\ref{fig:2}) is on undirected graph. Here we used the adjacency list to represent graph G. We introduce a new field \textit{weight} in the \textit{list} to store the degree of each individual vertex. \textit{i.e.}  

struct list

\{

\vspace{1 mm}    \hspace{5 mm}    char vertex;

\vspace{1 mm}    \hspace{5 mm}    int weight;

\vspace{1 mm}    \hspace{5 mm}   struct node *next;

\vspace{1 mm}    \hspace{5 mm}    struct node *ref;

\};

\begin{algorithm}
\caption{Approximate Vertex-Cover Algorithm}
\begin{algorithmic} [1]
\REQUIRE {{In the \textit{List} we introduce another field \textit{weight} \\    The value of \textit{weight} is number of node in reference \textit{(ref)}}}
%\STATE $requirement$
%\RETURN $requirement$
\STATE $C^+ \leftarrow  \emptyset $
\STATE L = \textit{List}
\STATE L[w] = Reference weight 
%\STATE \label{marker}
\STATE (h, v) = highest weight of the list and respective vertex  \label{marker}
\IF {$h \neq 0$} 
	\STATE $C^+  \leftarrow C^+ \cup {v}$
	\STATE v[w] $\leftarrow 0$
	\FOR{all vertex of \textit{List} L[ref] $\in \{v\}$}
		\STATE L[w] $\leftarrow$ L[w]-1
	\ENDFOR
	\STATE \textbf{go to} \ref{marker}
\ELSE
	\RETURN $C^+$
\ENDIF
\end{algorithmic}
\end{algorithm}

It's space complexity is $\Theta(V+E)$ \cite{RefB1}.  For step 4 search in the list is O(V). In the for loop \textit{i.e. step 8 to 10} for each individual vertex need to search its reference vertices. So worst time complexity is $O(V*(V-1))$= $O(V^2)$. So the worst time complexity of the graph is $\Theta(V^2)$.

\newtheorem{mydef}{Theorem}
\begin{mydef}
\emph{(Thomas H. Cormen et.al. \cite{RefB1})}
\label{vcp1}
APPROX-VERTEX-COVER is a polynomial-time 2-approximation algorithm.
\end{mydef}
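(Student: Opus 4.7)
The plan is to verify the three things needed for the claim: (i) APPROX-VERTEX-COVER runs in polynomial time, (ii) it actually returns a vertex cover, and (iii) the returned cover has size at most twice the size of an optimal vertex cover. Parts (i) and (ii) are essentially bookkeeping: the algorithm keeps picking an arbitrary uncovered edge $(u,v)$, adds both endpoints to the output set $C$, and deletes every edge incident to $u$ or $v$. With adjacency-list representation this is $O(V+E)$ overall, and the loop only halts once no edge remains, so every edge of $G$ has at least one endpoint in $C$.

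The interesting part is the $2$-approximation bound. The plan is to let $A$ denote the set of edges chosen by the algorithm in line where an arbitrary uncovered edge is selected, and to make two observations about $A$. First, $A$ is a matching: once an edge $(u,v)$ is picked, every other edge touching $u$ or $v$ is removed before the next iteration, so no two edges of $A$ share an endpoint. Second, the output cover satisfies $|C| = 2|A|$, because each edge of $A$ contributes exactly two fresh vertices to $C$.

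Next I would bring in an optimal vertex cover $C^{*}$ and use the key fact that any vertex cover must cover each edge of $A$ with at least one endpoint. Since the edges of $A$ are pairwise vertex-disjoint (by the matching property), the endpoints used to cover them must be distinct, giving $|C^{*}| \ge |A|$. Combining this with $|C| = 2|A|$ yields
\begin{equation*}
|C| \;=\; 2|A| \;\le\; 2\,|C^{*}|,
\end{equation*}
which is exactly the $2$-approximation guarantee.

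The main subtlety, and the only place one could slip, is the matching argument: one must be careful that the deletion step truly removes \emph{all} incident edges before the next edge is picked, so that the chosen edges are genuinely disjoint. Once that is in hand, the lower bound $|C^{*}| \ge |A|$ follows immediately and the rest is arithmetic. Polynomial-time execution then follows from the fact that each edge is examined a constant number of times across the whole run.
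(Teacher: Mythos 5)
Your proof is correct and is exactly the standard matching-based argument from Cormen et al.\ that this theorem's citation points to: the paper itself states Theorem~\ref{vcp1} without giving any proof, deferring entirely to the textbook reference. Your reconstruction --- the chosen edges form a matching $A$, the output satisfies $|C| = 2|A|$, and any vertex cover must use a distinct vertex for each edge of $A$ so that $|C^*| \ge |A|$, giving $|C| \le 2|C^*|$ --- is precisely that textbook proof, with no gaps.
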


\begin{mydef}
\label{vcp2}
Proposed Approximate Vertex-Cover is a polynomial-time $(2-\varepsilon)$-approximation algorithm.
\end{mydef}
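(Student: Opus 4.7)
The plan is to verify three properties: (i) $C^+$ is a valid vertex cover, (ii) the running time is polynomial, and (iii) the approximation ratio strictly improves on the factor $2$ guaranteed by Theorem \ref{vcp1}. The first two are essentially immediate from the preceding material. For (ii), I would cite the $\Theta(V^2)$ analysis already given, noting that at most $V$ iterations occur and each iteration scans and updates the list in $O(V)$ time. For (i), I would observe that the algorithm terminates only when the largest weight $h$ equals $0$, i.e.\ every remaining vertex has no uncovered incident edge, so $C^+$ covers every edge of $G$.

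For (iii), the natural strategy is a side-by-side comparison with APPROX-VERTEX-COVER, iteration by iteration. That algorithm charges two vertices per picked edge, whereas the proposed algorithm charges only one vertex per iteration, namely the vertex $v^\star$ of maximum current weight. The first step is to show that the edges removed in one step of the proposed algorithm (those incident to $v^\star$) are at least as many as those removed in one step of the 2-approximation that happens to pick an edge incident to $v^\star$, because $v^\star$ has maximum degree among surviving vertices. This already yields $|C^+| \le |C|$, where $C$ is any output of APPROX-VERTEX-COVER, and hence $|C^+| \le 2\,|C^*|$ where $C^*$ is an optimum cover.

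To sharpen the bound to $(2-\varepsilon)$, I would set up a charging scheme that assigns each $v^\star$ selected by the proposed algorithm to the edges of $C^*$ it covers. Since $v^\star$ is chosen greedily by degree, whenever $v^\star \notin C^*$ it must share an incident edge with some optimum vertex $u \in C^*$; the plan is to use the maximality of $\deg(v^\star)$ to argue that the pair $(v^\star,u)$ would be processed \emph{together} by APPROX-VERTEX-COVER, so the proposed algorithm saves one unit per such pair. Summing over iterations would give $|C^+| \le |C| - \delta |C^*|$ for some $\delta > 0$ depending on the structure of $G$, which rearranges to $|C^+| \le (2-\varepsilon)|C^*|$ with $\varepsilon = 2\delta$.

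The main obstacle is producing a \emph{uniform} positive $\varepsilon$, i.e.\ one independent of the input graph, since the savings-per-iteration argument can degenerate on instances where every surviving vertex has the same degree (regular subgraphs, long paths, balanced bipartite pieces). I expect the delicate part of the proof to be handling these tie-broken cases: either by showing that an arbitrary tie-break still yields a strictly better cover than the worst-case pair selection of APPROX-VERTEX-COVER, or by inducting on the residual graph after one greedy step, so that the $(2-\varepsilon)$ bound propagates. If the uniform $\varepsilon$ cannot be extracted in closed form, the fallback is to state it as the minimum of the per-iteration savings over a finite case analysis of the possible local configurations around $v^\star$.
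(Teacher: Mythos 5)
Your instinct about where the difficulty lies is exactly right, but the obstacle you flag in your last paragraph is not merely ``delicate'' --- it is fatal, because Theorem~\ref{vcp2} is false for any constant $\varepsilon>0$. The algorithm being analyzed is precisely the greedy maximum-degree heuristic, and this is the textbook example of a vertex-cover algorithm whose worst-case ratio is $\Theta(\log |V|)$, not a constant. Concretely: take a left side $L$ of $n$ vertices, and for each $i=2,\dots,n$ a block $R_i$ of $\lfloor n/i\rfloor$ right-side vertices, each adjacent to $i$ vertices of $L$, with the neighborhoods of vertices inside the same block disjoint. Then every vertex of $L$ has at most one neighbor in each block, so after the blocks $R_n,\dots,R_{i+1}$ have been consumed, every surviving $R_i$-vertex has degree $i$ while every surviving $L$-vertex has degree at most $i-1$; the greedy rule is therefore \emph{forced} (no tie-breaking subtlety at all) to select all of $R_2\cup\dots\cup R_n$, a cover of size $\Theta(n\log n)$, whereas $L$ itself is a cover of size $n$. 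So no charging scheme can extract a uniform $\varepsilon$; the per-iteration savings you hoped to sum genuinely degenerate. The same family also kills your intermediate claim $|C^+|\le |C|$: covering at least as many edges per step as APPROX-VERTEX-COVER does not imply a smaller final cover, and indeed on these graphs APPROX-VERTEX-COVER outputs at most $2n$ vertices while the greedy output is $\Omega(n\log n)$.

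You should also know that the paper's own proof does not close this gap --- it sidesteps it. The paper simply asserts $|C^+| = |C| - \varepsilon$ (``most of the times we don't consider both end points'') and then rewrites this additive slack as the multiplicative bound $|C^+| = (2-\varepsilon)|C^*|$. That algebra does not follow even from its own premises: from $|C|\le 2|C^*|$ and $|C^+| = |C|-\varepsilon$ one gets only $|C^+|\le 2|C^*|-\varepsilon$, which coincides with $(2-\varepsilon)|C^*|$ only when $|C^*|=1$; and the assertion $|C^+|\le|C|$ is itself false, as above. Your proposal, by honestly attempting to carry out the comparison the paper waves at, correctly locates the missing step; the resolution is not a cleverer analysis of tie-broken cases but the rejection of the theorem itself.
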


\begin{proof}
In Theorem ~\ref{vcp1} $C$ is the set of vertex for APPROX-VERTEX-COVER and $C^*$ in the optimal vertex cover $i.e. |C| \leq 2|C^*|$. In our approach we pick one vertex and remove the edges connected to that vertex. So most of the times we don't consider both end point of one edges, which followed in Theorem ~\ref{vcp1}. For our proposed method we consider the resultant set of vertex is $C^+$, then $|C^+| = |C| - \varepsilon$.\\
$ \Rightarrow |C^+| = (2- \varepsilon) |C^*|$, $0  \leq  \varepsilon   \leq  1$.
\qed
\end{proof}

In some cases proposed method$(C^+)$ approaches to optimal solution when $\varepsilon$ value is 1. Our method is shown in Fig. \ref{fig:1}, algorithm (Approximate Vertex-Cover), and proved in Theorem ~\ref{vcp2}. The comparison of the optimal vertex-cover, previous vertex-cover and proposed vertex-cover shown in Fig. \ref{fig:2}.

\begin{figure}
% Use the relevant command to insert your figure file.
% For example, with the graphicx package use
  \includegraphics[width=0.65\textwidth]{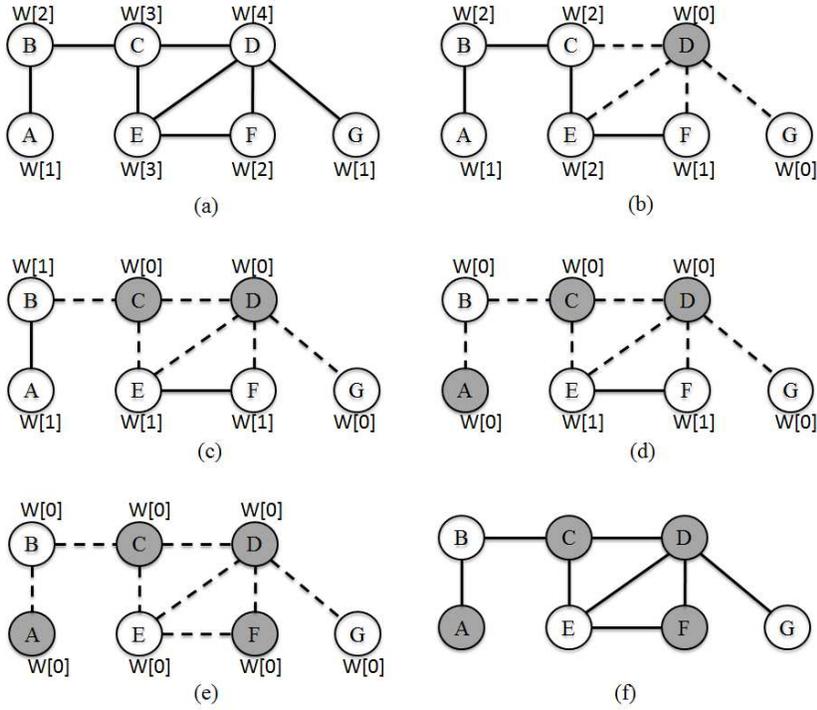}
% figure caption is below the figure
\caption{The operation of Approximate Vertex-Cover. \textit{(a)} The input graph G, which has 7 vertices and 8 edges. \textit{(b)} The vertex \textit{D} has the highest weight, is the first vertex chosen by Approximate Vertex-Cover. The weight of the vertex initiate to zero and the vertices directly connected from vertex \textit{D} weight value decrees by 1. The vertex \textit{D} is shaded and the edges is dashed incident from vertex \textit{D}. The vertex \textit{D} add to the set $C^+$. \textit{(c)} There are three vertex with highest weight (\textit{B, C, E}); Arbitrary vertex \textit{C} is chosen and add to the set $C^+$. \textit{(d)} Vertex \textit{A} is chosen; added to $C^+$. \textit{(e)} Vertex \textit{F} is chosen; added to $C^+$. \textit{(f)} The resultant vertex cover with our proposed method. Approximate Vertex-Cover, contains the four vertices \textit{a, c, d, f}. }
\label{fig:1}       % Give a unique label
\end{figure}
%
% For two-column wide figures use
\begin{figure*}
% Use the relevant command to insert your figure file.
% For example, with the graphicx package use
  \includegraphics[width=0.65\textwidth]{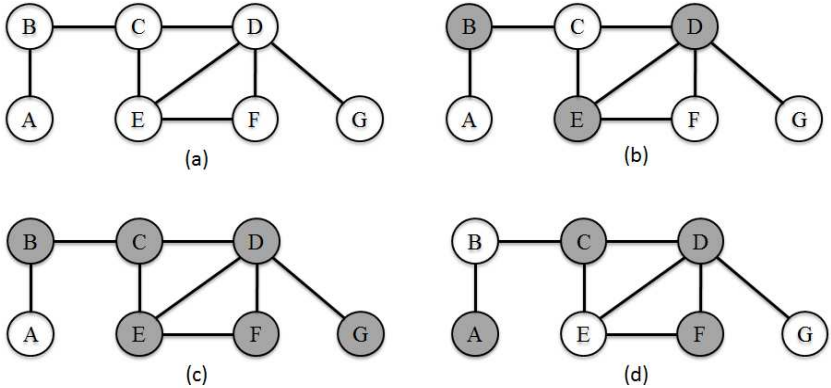}
% figure caption is below the figure
\caption{Comparison of optimal, Traditional and proposed vertex cover result. \textit{(a)} The inputted original graph G. \textit{(b)} The optimal result of input graph. \textit{(c)} The result with traditional method. \textit{(d)} The proposed algorithm's result for vertex-cover algorithm.}
\label{fig:2}       % Give a unique label
\end{figure*}

\section{Conclusion}
Here in our proposed technique we produce the set of vertex for vertex-cover problem. Which is more near optimal solution and better than the previous technique.

\end{document}